\documentclass{article} 
\usepackage{times}
\usepackage{graphicx} 
\usepackage{subfigure}
\usepackage{amsmath}
\usepackage{amssymb}
\usepackage[numbers]{natbib}
\usepackage{hyperref}
\usepackage{mathtools}
\usepackage{algorithm}
\usepackage{algorithmic}
\usepackage{amsfonts}
\usepackage[russian,english]{babel}

\usepackage{epstopdf}
\usepackage[top=3cm, bottom=3cm, left=2.5cm, right=2.5cm]{geometry}

\newtheorem{theorem}{Theorem}

\newtheorem{lemma}[theorem]{Lemma}

\newtheorem{corollary}[theorem]{Corollary}

\newenvironment{proof}{\setlength\parindent{0pt}{\bf Proof.    }}{\hfill\rule{2mm}{2mm}}

\title{Large-scale Log-determinant Computation through Stochastic Chebyshev Expansions}

\author{
Insu Han \thanks{Department of Electrical Engineering, Korea Advanced Institute of Science and Technology, Korea. Emails: hawki17@kaist.ac.kr} \and 
Dmitry Malioutov \thanks{IBM T.\ J.\ Watson Research, Yorktown Heights, NY, USA, Email: dmaliout@gmail.com} \and 
Jinwoo Shin \thanks{Department of Electrical Engineering, Korea Advanced Institute of Science and Technology, Korea. Email: jinwoos@kaist.ac.kr}}

\begin{document}

\maketitle

\begin{abstract}
Logarithms of determinants of large positive definite matrices appear ubiquitously
in machine learning applications including Gaussian graphical and Gaussian process
models, partition functions of discrete graphical models, minimum-volume ellipsoids,
metric learning and kernel learning. Log-determinant computation involves the Cholesky
decomposition at the cost cubic in the number of variables, i.e., the matrix dimension,
which makes it prohibitive for large-scale applications. We propose a linear-time
randomized algorithm to approximate log-determinants for very large-scale positive
definite and general non-singular matrices using a stochastic trace approximation, called
the Hutchinson method, coupled with Chebyshev polynomial expansions that both rely on
efficient matrix-vector multiplications. We establish rigorous additive and
multiplicative approximation error bounds depending on the condition number of the
input matrix. In our experiments, the proposed algorithm can provide very high
accuracy solutions at orders of magnitude faster time than the Cholesky
decomposition and Schur completion, and enables us to compute log-determinants
of matrices involving tens of millions of variables.
\end{abstract}

\section{Introduction}
\label{submission}

Scalability of machine learning algorithms for extremely large data-sets and models has
been increasingly the focus of attention for the machine learning community, with
prominent examples such as first-order stochastic optimization methods and randomized
linear algebraic computations. One of the important tasks from linear algebra that
appears in a variety of machine learning problems is computing the log-determinant
of a large positive definite matrix. For example, serving as the normalization
constant for multivariate Gaussian models, log-determinants of covariance (and
precision) matrices play an important role in inference, model selection and learning
both the structure and the parameters for Gaussian Graphical models and Gaussian
processes \cite{rue_GMRF, rasmussen_GP, dempster_cov_sel}. Log-determinants also play
an important role in a variety of Bayesian machine learning problems, including
sampling and variational inference \cite{mackay_book}. In addition, metric and kernel
learning problems attempt to learn quadratic forms adapted to the data, and formulations
involving Bregman divergences of log-determinants have become very popular
\cite{Dhillon_metric_learning, MVE_rousseuw}. Finally, log-determinant computation
also appears in some discrete probabilistic models, e.g., tree mixture models \cite{meila2001learning,anandkumar2012learning}
and Markov random fields \cite{wainwright2006log}.
In planar Markov
random fields \cite{schraudolph2009efficient,johnson2010learning} inference and learning
involve log-determinants of general non-singular matrices.

For a positive semi-definite matrix  $B\in \mathbb R^{d\times d}$, numerical linear
algebra experts recommend to compute log-determinant using the Cholesky decomposition.
Suppose the Cholesky decomposition is $B = L L^T$, then $\log \det (B) = 2 \sum_i \log L_{ii}$.
The computational complexity of Cholesky decomposition is cubic with respect to the number of
variables, i.e., $O(d^3)$.\footnote{
For sparse matrices with a small tree-width, the complexity of Cholesky decomposition is cubic
in the tree-width.} For large-scale applications involving more than tens of thousands
of variables, this operation is not feasible. Our aim in this paper is to compute accurate
approximate log-determinants for matrices of much larger size involving {\em tens of
millions} of variables.

\vspace{0.1in}
\noindent{\bf Contribution.} Our approach to compute accurate approximations of log-determinant
for a positive definite matrix
uses a combination of stochastic trace-estimators and Chebyshev polynomial
expansions. Using the Chebyshev polynomials, we first approximate the log-determinant by the
trace of power series of the input matrix. We then use a stochastic trace-estimator,
called the {\em Hutchison method} \cite{hutchinson1989stochastic}, to estimate the
trace using multiplications between the input matrix and random vectors.
The main assumption for our method is that the matrix-vector
product can be computed efficiently. For example, the time-complexity of the proposed algorithm
grows linearly with respect to the number of non-zero entries in the input matrix. We also extend our approach
to general non-singular matrices to compute the absolute values of their log-determinants.
We establish rigorous additive and multiplicative approximation error bounds
for approximating the log-determinant under the proposed algorithm. Our theoretical results
provide an analytic understanding on our Chebyshev-Hutchison method depending
on sampling number, polynomial degree and the condition number (i.e., the ratio between
the largest and smallest singular values) of the input matrix. In particular, they
imply that if the condition number is $O(1)$, then the algorithm provides
$\varepsilon$-approximation guarantee (in multiplicative or additive) in linear time
for any constant $\varepsilon>0$.

We first apply our algorithm to obtain a randomized linear-time approximation scheme for
counting the number of spanning trees in a certain class of graphs 
where it could be used for efficient inference in tree mixture models
\cite{meila2001learning,anandkumar2012learning}. We also apply our algorithm for
finding maximum likelihood parameter estimates of Gaussian Markov random fields
of size $5000\times 5000$ (involving $25$ million variables!), which is infeasible for
the Cholesky decomposition. Our experiments show that our proposed algorithm is
orders of magnitude faster than the Cholesky decomposition and Schur completion for
sparse matrices and provides solutions with $99.9\%$ accuracy in approximation. It
can also solve problems of dimension tens of millions in a few minutes on our single
commodity computer.
Furthermore,
the proposed algorithm is very easy to parallelize and hence has a potential to handle
even a bigger size.
In particular, the
Schur method was used as a part of QUIC algorithm \cite{hsieh2013big} for sparse
inverse covariance estimation with over million variables, hence our algorithm could
be used to further improve its speed and scale.

\vspace{0.1in}
\noindent{\bf Related work.} Stochastic trace estimators have been studied in the literature
in a number of applications. \cite{bekas2007diag, malioutov2006low} have used a stochastic
trace estimator to compute the diagonal of a matrix or of matrix inverse. Polynomial
approximations to band-pass filters have been used to count the number of eigenvalues in
certain intervals \cite{eigenvalue_histograms_Saad}. 
Stochastic approximations of
score equations have been applied in \cite{stein2013stochastic} to learn large-scale Gaussian processes.
{The works closest to ours which have used stochastic trace estimators for Gaussian process parameter learning
are \cite{leithead_GP_stoch}
and \cite{aune2014_GP} which instead use Taylor expansions and Cauchy integral formula, respectively.
A recent improved analysis using Taylor expansions has also appeared  in \cite{boutsidis2015randomized}.
However, as reported in Section \ref{sec:exp}, our method using Chebyshev expansions provides much better accuracy in experiments
than that using Taylor expansions, and
\cite{aune2014_GP} need  Krylov-subspace linear system solver that is computationally expensive.
\cite{pace2004chebyshev} also use Chebyshev polynomials for log-determinant computation, but
the method is deterministic and only applicable to polynomials of small degree.
The novelty of our work is combining the Chebyshev approximation with Hutchison trace
estimators, which allows us to design a linear-time algorithm with rigorous approximation guarantees.

\vspace{0.1in}
\noindent{\bf Organization.} The structure of the paper is as follows. We introduce the necessary background
in Section \ref{sec:trace}, and describe our algorithm with approximation
guarantees in Section \ref{sec:main}. Section \ref{sec:proof} provides the proof of approximation guarantee of our algorithm, and we report experimental results in Section \ref{sec:exp}.

\section{Background}

In this section, we describe the preliminaries for our approach to
approximate the log-determinant of a \textit{positive definite} matrix.
Our approach combines the following two techniques:
(a) designing a trace-estimator for the log-determinant of positive definite matrix via Chebyshev approximation \cite{mason2002chebyshev} 
and (b) approximating the trace of positive definite matrix via Monte Carlo methods, e.g., Hutchison method \cite{hutchinson1989stochastic}.

\subsection{Chebyshev Approximation}
\label{sec:cheby}

The Chebyshev approximation technique is used to approximate analytic function with certain
orthonormal polynomials. We use $p_n(x)$ to denote the Chebyshev approximation of degree $n$
for a given function $f :[-1,1]\to \mathbb R$:
\begin{align*}
f(x) \approx p_n(x) = \sum_{j=0}^n c_j  T_j(x),
\end{align*}
where the coefficient $c_i$ and the $i$-th Chebyshev polynomial $T_i(x)$ are defined as
\begin{align}
&c_i
= \begin{dcases}
      \frac{1}{n+1}  \sum_{k=0}^n f(x_k) \ T_0(x_k) & \text{if $\ i=0$}\\
      \frac{2}{n+1}  \sum_{k=0}^n f(x_k) \ T_i(x_k) & \text{otherwise}
      \end{dcases} \\
\nonumber \\
&T_{i+1}(x) = 2 x T_i (x) - T_{i-1} (x) \label{eq:chebreq} \qquad \text{for $\ i \ge 1$}
\end{align}
where $x_k = \cos \Big( \frac{ \pi (k + 1/2 )}{ n+1} \Big)$ for $k = 0,1,2, \dots n$ and $T_0(x) = 1$, $T_1(x) = x$.

Chebyshev approximation for scalar functions can be naturally generalized to matrix functions.
Using the Chebyshev approximation $p_n(x)$ for function $f(x) = \log (1-x)$ we obtain
the following approximation to the log-determinant of a positive definite matrix $B\in \mathbb{R}^{d\times d}$:

\vspace{-0.2in}
\begin{align*}
\log \det B  &= \log \det \left( I - A \right)  = \sum_{i=1}^d \log (1-\lambda_i)\\
&\approx \sum_{i=1}^d p_n (\lambda_i)~=  \sum_{i=1}^d \sum_{j=0}^n c_j  T_j(\lambda_i)\\
&=  \sum_{j=0}^nc_j \sum_{i=1}^d   T_j(\lambda_i)
~=  \sum_{j=0}^n c_j {\tt tr}\left(T_j \left(A\right)\right),
\end{align*}

where $A=I-B$ has eigenvalues $0\leq \lambda_1 , \dots , \lambda_d\leq 1$ and
the last equality is from the fact that $\sum_{i=1}^d p(\lambda_i) = {\tt tr}(p(A))$ for
any polynomial $p$.\footnote{${\tt tr}(\cdot)$ denotes the trace of a matrix.}
We remark that other polynomial approximations, e.g., Taylor, can also be used to
approximate log-determinants. We focus on the Chebyshev approximation in this paper
due to its superior empirical performance and rigorous error analysis.

\subsection{Trace Approximation via {Monte-Carlo Method}}
\label{sec:trace}

The main challenge to compute the log-determinant of a positive definite matrix in the previous section
is calculating the trace of $T_j \left(A\right)$ efficiently without evaluating
the entire matrix $A^k$.} We consider a Monte-Carlo approach for estimating the trace of a
matrix. First, a random vector  $\mathbf{z}$ is drawn from some fixed distribution, such that
the expectation of $\mathbf{z}^\top A  \mathbf{z}$ is equal to the trace of $A$. By sampling
$m$ such i.i.d random vectors, and averaging we obtain an estimate of ${\tt tr}(A)$.

It is known that the Hutchinson method, where components of the random vectors $Z$ are i.i.d
Rademacher random variables, i.e., $\Pr(+1) = \Pr(-1) = \frac{1}{2}$,  has the smallest
variance among such Monte-Carlo methods \cite{hutchinson1989stochastic, avron2011randomized}.
It has been used extensively in many applications \cite{avron2010counting, hutchinson1989stochastic, aravkin2012robust}. Formally, the Hutchinson trace estimator ${\tt tr}_m(A)$ is known to
satisfy the following:
$$\mathbf{E} \left[ {\tt tr}_m(A) :=\frac{1}{m} \sum_{i=1}^{m} \mathbf{z}_{i}^\top A \mathbf{z}_i \right] = {\tt tr}(A)$$
$$\mathbf{Var} \left[ {\tt tr}_m(A) \right] = 2 \left( \| A \|^2 - \sum_{i=1}^n A_{ii}^2 \right).$$
{Note that computing $\mathbf{z}^\top A  \mathbf{z}$ requires only multiplications between a matrix and a vector,
which is particularly appealing when evaluating $A$ itself is expensive, e.g., $A=B^k$ for
some matrix $B$ and large $k$.
Furthermore, for the case $A=T_j \left(X\right)$, one can compute
$\mathbf{z}^\top T_j \left(X\right)  \mathbf{z}$ more efficiently
using the following recursion on the vector $w_j = T_j(X) \mathbf{z}$:
$$w_{j+1} = 2 X w_{j} - w_{j-1},$$
which follows directly from \eqref{eq:chebreq}.

\section{Log-determinant Approximation Scheme }\label{sec:main}

Now we are ready to present
algorithms to approximate the absolute value of  log-determinant of an arbitrary non-singular
square matrix $C$.
Without loss of generality, we assume that singular values of $C$ 
are in the interval $[\sigma_{\min}, \sigma_{\max}]$ for some $\sigma_{\min},\sigma_{\max}>0$, i.e.,
the condition number $\kappa(C)$ is at most $\kappa_{\max}: = \sigma_{\max}/\sigma_{\min}$.
The proposed algorithms are not sensitive to tight knowledge of $\sigma_{\min},\sigma_{\max}$, but
some loose lower and upper bounds on them, respectively, suffice.

We first present a log-determinant approximation scheme for positive definite matrices in Section \ref{sec:psd}
and that for general non-singular ones in Section \ref{sec:gen} later.

\subsection{Algorithm for Positive Definite Matrices}\label{sec:psd}
In this section, we describe our proposed algorithm for estimating the log-determinant of a
positive definite matrix whose eigenvalues are less than one, i.e., $\sigma_{\max}< 1$.
It is used as a subroutine for estimating the log-determinant of a general non-singular matrix in the next section.
The formal description of the algorithm is given in what follows.

\begin{algorithm}[tbh!]
   \caption{Log-determinant approximation for positive definite matrices with $\sigma_{\max}< 1$}
\begin{algorithmic}\label{alg1}
   \STATE {\bfseries Input:} positive definite matrix $B \in \mathbb{R}^{d \times d}$ with eigenvalues in [$\delta$ , $1-\delta$] for some $\delta> 0$,  sampling number $m$ and polynomial degree $n$
   \STATE {\bfseries Initialize:} $A  \leftarrow  I-B$, $\Gamma \leftarrow 0$
   \FOR{$i=0$ {\bfseries to} $n$}
   \STATE $c_i  \leftarrow$ $i$-th coefficient of Chebyshev approximation for $\log ( 1 - \frac{ (1-2\delta)x + 1}{2} )$
   \ENDFOR
   \FOR {$i = 1$ { \bfseries to } $m$}
   \STATE Draw a Rademacher random vector $\mathbf{v}$  and $\mathbf{u} \leftarrow c_0 \ \mathbf{v}$
   \IF{$n > 1$}
   \STATE $\mathbf{w}_0 \leftarrow \mathbf{v}$ and $\mathbf{w}_1 \leftarrow A\mathbf{v}$
   \STATE $\mathbf{u} \leftarrow \mathbf{u} + c_1 A\mathbf{v}$
   \FOR {$j = 2$ { \bfseries to } $n$}
   \STATE $\mathbf{w}_2 \leftarrow 2 A \mathbf{w}_1 - \mathbf{w}_0$
   \STATE $\mathbf{u} \leftarrow \mathbf{u} + c_j \ \mathbf{w}_2$
   \STATE $\mathbf{w}_0 \leftarrow \mathbf{w}_1$ and $\mathbf{w}_1 \leftarrow \mathbf{w}_2$
   \ENDFOR
   \ENDIF
   \STATE $\Gamma \leftarrow \Gamma + \mathbf{v^{\top}} \mathbf{u}/ m $
   \ENDFOR
   \STATE {\bfseries Output:} $\Gamma$
\end{algorithmic}
\end{algorithm}
We establish the following theoretical guarantee of the above algorithm, where its proof is given in Section \ref{sec:pfthm1}.

\begin{theorem}\label{thm:main1}
Given $\varepsilon, \zeta\in (0,1)$,
consider the following inputs for {\bf Algorithm \ref{alg1}}:
\begin{itemize}
\item $B \in \mathbb{R}^{d \times d}$ be a positive definite matrix with eigenvalues in $[\delta,1-\delta]$ for some $\delta \in (0, 1/2)$.
\item $ m \ge 54 \varepsilon^{-2} \log{\left(\frac{2}{\zeta}\right)} $
\item $ n \ge \frac{\log{\left( \frac{20}{\varepsilon} \left( \sqrt{\frac2{\delta}-1} - 1 \right) \frac{\log \left( 2 ( 1/\delta-1) \right)}{\log{\left( 1/{1-\delta}\right)}} \right)}}{\log{\left( \frac{\sqrt{2-\delta}+\sqrt{\delta}}{\sqrt{2-\delta}-\sqrt{\delta}}\right)}}
= O\left( \sqrt{\frac1{\delta}} \log \left( \frac1{\varepsilon \delta} \right) \right)$
\end{itemize}
Then, it follows that
\begin{align*}
\Pr \left[ \ \left| \log \det B - \Gamma \right| \le  \varepsilon \left| \log \det B \right| \ \right] \ge 1 - \zeta
\end{align*}
where $\Gamma$ is the output of {\bf Algorithm \ref{alg1}}.
\end{theorem}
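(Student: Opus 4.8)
The plan is to split the total error into a deterministic polynomial-approximation part and a random trace-estimation part, and to control each separately. Writing $A = I - B$, rescaling its spectrum into $[-1,1]$, and letting $p_n$ denote the degree-$n$ Chebyshev approximation of $h(x) = \log\bigl(1 - \tfrac{(1-2\delta)x+1}{2}\bigr)$ whose coefficients the algorithm computes, I would first note that in expectation the algorithm outputs $\mathrm{tr}\,p_n$ (the trace of $p_n$ evaluated at the rescaled matrix), so by the triangle inequality
\begin{align*}
\left| \log\det B - \Gamma \right| \le \underbrace{\left| \log\det B - \mathrm{tr}\,p_n \right|}_{\text{Chebyshev error}} + \underbrace{\left| \mathrm{tr}\,p_n - \Gamma \right|}_{\text{Hutchinson error}} .
\end{align*}
It then suffices to make each term at most a constant fraction of $\varepsilon\,|\log\det B|$ (an $\varepsilon/3$ budget, which is where the constant $54 = 6\cdot 9$ in the bound on $m$ will arise) and to take a union bound for the probabilistic term.

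For the Chebyshev (deterministic) term I would use that both quantities are sums of $d$ eigenvalue evaluations, so $|\log\det B - \mathrm{tr}\,p_n| \le d\,\max_{x\in[-1,1]}|h(x) - p_n(x)|$. The key analytic input is the standard truncation estimate for Chebyshev approximation of a function analytic inside a Bernstein ellipse $E_\rho$: the error is $O\bigl(U\,\rho^{-n}/(\rho-1)\bigr)$ with $U = \max_{E_\rho}|h|$. The function $h$ has its branch-point singularity at $x^\star = \tfrac{1}{1-2\delta} > 1$, so I would choose the (slightly conservative) ellipse whose right-most point sits at $\tfrac{1}{1-\delta}$, strictly inside $x^\star$; this is exactly the choice that yields $\rho = \tfrac{\sqrt{2-\delta}+\sqrt\delta}{\sqrt{2-\delta}-\sqrt\delta}$ and keeps $U$ finite. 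Evaluating $h$ at that right-most point gives $U = \log\bigl(2(1/\delta - 1)\bigr)$, while $1/(\rho-1) = \tfrac12\bigl(\sqrt{2/\delta - 1}-1\bigr)$. To pass from this \emph{additive} bound to the \emph{multiplicative} form I would divide by the lower bound $|\log\det B| \ge d\,\log\tfrac{1}{1-\delta}$, which holds since every eigenvalue of $B$ is at most $1-\delta$; this cancels the factor $d$ and introduces the $1/\log\tfrac{1}{1-\delta}$ term. Assembling these pieces and solving $\rho^{-n}(\cdots) \le \varepsilon/3$ for $n$ reproduces the stated degree requirement, and the asymptotic $O\bigl(\sqrt{1/\delta}\,\log(1/\varepsilon\delta)\bigr)$ follows from $\log\rho = \Theta(\sqrt\delta)$ as $\delta\to 0$.

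For the Hutchinson (random) term I would invoke a concentration bound for the Rademacher trace estimator — for instance the Hoeffding-type guarantee that for a sign-definite symmetric matrix $M$ one has $\Pr[\,|\mathrm{tr}_m(M) - \mathrm{tr}(M)| > \varepsilon'|\mathrm{tr}(M)|\,] \le 2\exp(-m\varepsilon'^2/c)$ — applied to the symmetric matrix $p_n$. For $n$ chosen as above, each eigenvalue of $p_n$ is close to $\log\mu_i < 0$ and hence negative, so $-p_n$ is positive definite; this lets me use $\|p_n\|_F \le |\mathrm{tr}\,p_n|$ and obtain a genuine multiplicative bound. Taking $\varepsilon' = \varepsilon/3$ and forcing the failure probability below $\zeta$ gives $m \ge 54\,\varepsilon^{-2}\log(2/\zeta)$. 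Since $|\mathrm{tr}\,p_n|$ differs from $|\log\det B|$ only by the already-controlled Chebyshev error, the relative error measured against $|\log\det B|$ stays within the same budget, and a single union bound over the two events finishes the proof.

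The step I expect to be the main obstacle is the Chebyshev error analysis: one must pin down the exact region of analyticity of $h$, justify the conservative ellipse choice that stays clear of the singularity while keeping $U$ explicit, and then convert the absolute truncation bound into the multiplicative form through the lower bound on $|\log\det B|$, tracking all constants carefully so that the clean closed-form degree requirement emerges. By contrast, the trace-estimation step is essentially a black-box application of an existing Rademacher concentration inequality once the sign-definiteness of $p_n$ is observed.
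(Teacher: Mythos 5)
Your proposal follows essentially the same route as the paper: the same triangle-inequality split into a Chebyshev truncation term and a Hutchinson concentration term, the same Bernstein ellipse with rightmost point $1/(1-\delta)$ giving $K=\frac{\sqrt{2-\delta}+\sqrt{\delta}}{\sqrt{2-\delta}-\sqrt{\delta}}$, the same lower bound $|\log\det B|\ge d\log\frac{1}{1-\delta}$ to convert additive to multiplicative error, and the same observation that $p_n$ is negative definite so the sign-definite trace-estimator bound applies with $\varepsilon/3$, yielding $54=6\cdot 9$. The only bookkeeping difference is that the paper bounds $\max_{\mathcal E_K}|f\circ g|$ by $\sqrt{\log^2(2(1/\delta-1))+\pi^2}\le 5\log(2(1/\delta-1))$ (accounting for the imaginary part of $\log$ on the complex ellipse, which is where the constant $20=4\cdot 5$ comes from) and uses an $\varepsilon/2$ budget per term rather than $\varepsilon/3$; your $U=\log(2(1/\delta-1))$ slightly understates the ellipse maximum but does not change the argument.
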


The bound on polynomial degree $n$ in the above theorem is relatively tight, e.g., it implies to choose
$n= 14$ for $\delta=0.1$ and $\varepsilon=0.01$. However, our bound on sampling number $m$
is not, where 
we observe that $m\approx 30$ is
sufficient 
for high accuracy in our experiments.
We also remark that the time-complexity of {\bf Algorithm \ref{alg1}} is $O(mn\|B\|_0)$,
where $\|B\|_0$ is the number of non-zero entries of $B$.
This is because
the algorithm requires only multiplications of matrices and vectors.
In particular, if $m,n=O(1)$, the complexity is linear with respect to the input size.
Therefore, Theorem \ref{thm:main1} implies that one can choose
$m,n=O(1)$ for $\varepsilon$-multiplicative approximation with probability $1-\zeta$ given constants $\varepsilon,\zeta>0$.

\subsection{Algorithm for General Non-Singular Matrices}\label{sec:gen}

Now, we are ready to present our linear-time approximation scheme for the log-determinant of general non-singular
matrix $C$, through generalizing the algorithm in the previous section.
The idea is simple: run {\bf Algorithm \ref{alg1}} with normalization of positive definite matrix $C^TC$.
This is formally described in what follows.

\begin{algorithm}[tbh!]
   \caption{Log-determinant approximation for general non-singular matrices}
\begin{algorithmic}\label{alg2}
\STATE {\bfseries Input:}  matrix $C \in \mathbb{R}^{d \times d}$ with 
singular values are in the interval $[\sigma_{\min}, \sigma_{\max}]$ for some $\sigma_{\min},\sigma_{\max}>0$,
sampling number $m$ and polynomial degree $n$
\vspace{0,05in}
\STATE {\bfseries Initialize:} $B \leftarrow  \frac1{\sigma_{\min}^2+\sigma_{\max}^2}C^TC$, $\delta \leftarrow \frac{\sigma_{\min}^2}{\sigma_{\min}^2+\sigma_{\max}^2}$
\vspace{0,05in}
\STATE $\Gamma \leftarrow$ Output of {\bf Algorithm \ref{alg1}} for inputs $B, m, n, \delta$
\vspace{0,05in}
\STATE {\bfseries Output:} $\Gamma\leftarrow \left( \Gamma + d \log{(\sigma_{\min}^2+\sigma_{\max}^2)} \right) / 2$
\end{algorithmic}
\end{algorithm}
{\bf Algorithm 2} 
is motivated to design from the equality $\log|\det C| = \frac12 \log \det C^T C$.
Given non-singular matrix $C$, one need to choose appropriate $\sigma_{\max},\sigma_{\min}$ to run it. 
In most applications, $\sigma_{\max}$ is easy to choose, e.g.,
one can choose $$\sigma_{\max} = \sqrt{\|C\|_1\|C\|_{\infty}},$$
or one can run the power iteration \cite{ipsen1997computing} to estimate a better bound.
On the other hand, $\sigma_{\min}$ is relatively not easy to obtain depending on problems.
It is easy to obtain in the problem of counting spanning trees we studied in Section \ref{sec:spanning},
and it is explicitly given as a parameter in many machine learning log-determinant applications
\cite{wainwright2006log}.
In general, one can use the inverse power iteration \cite{ipsen1997computing} to estimate it. Furthermore,
the smallest singular value is easy to compute for random matrices \cite{tao2009inverse, tao2010random}
and diagonal-dominant matrices \cite{gershgorin1931uber, moravca2008bounds}. 

The time-complexity of  {\bf Algorithm \ref{alg2}} is still $O(m n \|C\|_0)$ instead of
$O(m n \|C^TC\|_0)$ since {\bf Algorithm \ref{alg1}} requires multiplication of matrix $C^TC$
and vectors.
We state the following additive error bound of the above algorithm. 

\begin{theorem}\label{thm:main2}
Given $\varepsilon, \zeta\in (0,1)$,
consider the following inputs for {\bf Algorithm \ref{alg2}}:
\begin{itemize}
\item $C \in \mathbb{R}^{d \times d}$ be a matrix such that
singular values
are in the interval $[\sigma_{\min}, \sigma_{\max}]$ for some $\sigma_{\min},\sigma_{\max}>0$.
\item $m \ge \mathcal M \left(\varepsilon, \frac{\sigma_{\max}}{\sigma_{\min}}, \zeta \right)$ and $n \ge \mathcal N \left( \varepsilon, \frac{\sigma_{\max}}{\sigma_{\min}} \right)$, where
\end{itemize}
\vspace{-0.15in}
\begin{align*}
&\mathcal M(\varepsilon, \kappa, \zeta):=\frac{14}{\varepsilon^{2}} \left( \log \left( 1 + \kappa^2\right) \right)^2 \log{ \frac2{\zeta}  }\\
&\mathcal N \left( \varepsilon, \kappa \right) := \frac{\log{\left( \frac{20}{\varepsilon} \left( \sqrt{2 \kappa^2 + 1}-1 \right) \frac{\log{( 1 + \kappa^2 )} \log(2 \kappa^2) }{\log(1+\kappa^{-2})} \right)}}{\log{\left( \frac{{\sqrt{2 \kappa^2 + 1}}+1}{{\sqrt{2 \kappa^2 + 1}}-1} \right)}}
= O \left( {\kappa} \log{ \frac{\kappa}{\varepsilon} } \right)
\end{align*}
Then, it follows that
\begin{align*}
\Pr \left[ \ \left| \log{\left( \left| \det C \right| \right)}- \Gamma \right| \le  \varepsilon d \ \right] \ge 1 - \zeta
\end{align*}
where $\Gamma$ is the output of {\bf Algorithm \ref{alg2}}.
\end{theorem}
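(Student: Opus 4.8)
The plan is to reduce Theorem~\ref{thm:main2} to the multiplicative guarantee of Theorem~\ref{thm:main1} by analyzing what {\bf Algorithm~\ref{alg2}} actually computes. The starting point is the identity $\log|\det C| = \frac12\log\det(C^TC)$ that {\bf Algorithm~\ref{alg2}} exploits. Writing $B = \frac{1}{\sigma_{\min}^2+\sigma_{\max}^2}C^TC$, the eigenvalues of $B$ are the squared singular values of $C$ rescaled by $(\sigma_{\min}^2+\sigma_{\max}^2)^{-1}$, hence lie in $\left[\frac{\sigma_{\min}^2}{\sigma_{\min}^2+\sigma_{\max}^2},\frac{\sigma_{\max}^2}{\sigma_{\min}^2+\sigma_{\max}^2}\right] = [\delta,1-\delta]$ with $\delta = \frac{\sigma_{\min}^2}{\sigma_{\min}^2+\sigma_{\max}^2}$. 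First I would verify this containment, so that $B$ meets the hypothesis of {\bf Algorithm~\ref{alg1}}, and record the identity $1/\delta = 1+\kappa^2$ for $\kappa = \sigma_{\max}/\sigma_{\min}$, which translates every $\delta$-dependent quantity of Theorem~\ref{thm:main1} into the $\kappa$-dependent forms $\mathcal M$ and $\mathcal N$.

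Next I would write the output error exactly. Since $\det(C^TC) = (\sigma_{\min}^2+\sigma_{\max}^2)^d\det B$, we have $\log|\det C| = \frac12\big(\log\det B + d\log(\sigma_{\min}^2+\sigma_{\max}^2)\big)$, while the reported output is $\Gamma = \frac12\big(\Gamma' + d\log(\sigma_{\min}^2+\sigma_{\max}^2)\big)$, where $\Gamma'$ denotes the intermediate output of {\bf Algorithm~\ref{alg1}} on $B$. Subtracting, the $d\log(\sigma_{\min}^2+\sigma_{\max}^2)$ terms cancel and
\begin{align*}
\left|\log|\det C| - \Gamma\right| = \tfrac12\left|\log\det B - \Gamma'\right|.
\end{align*}
Theorem~\ref{thm:main1} controls the right-hand side multiplicatively, so the task reduces to converting a multiplicative guarantee on $\log\det B$ into an additive guarantee of size $\varepsilon d$.

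To carry out the conversion I would invoke Theorem~\ref{thm:main1} with internal accuracy $\varepsilon' := \frac{2\varepsilon}{\log(1+\kappa^2)}$ in place of $\varepsilon$; with probability at least $1-\zeta$ this yields $|\log\det B - \Gamma'| \le \varepsilon'\,|\log\det B|$. The key estimate is then a bound on $|\log\det B|$: since every eigenvalue $\lambda$ of $B$ satisfies $\delta\le\lambda\le 1-\delta<1$, each $-\log\lambda$ is positive and at most $\log(1/\delta)$, so $|\log\det B| = \sum_i(-\log\lambda_i)\le d\log(1/\delta) = d\log(1+\kappa^2)$. Combining the three facts gives $\left|\log|\det C|-\Gamma\right| \le \tfrac12\varepsilon'\,d\log(1+\kappa^2) = \varepsilon d$, as claimed.

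Finally I would check that the stated $\mathcal M$ and $\mathcal N$ indeed furnish a valid $\varepsilon'$-run of {\bf Algorithm~\ref{alg1}}. Substituting $\varepsilon'=2\varepsilon/\log(1+\kappa^2)$ into the sampling requirement $54\,\varepsilon'^{-2}\log(2/\zeta)$ gives $\tfrac{54}{4}\varepsilon^{-2}\big(\log(1+\kappa^2)\big)^2\log(2/\zeta)$, which is dominated by $\mathcal M(\varepsilon,\kappa,\zeta)=14\,\varepsilon^{-2}\big(\log(1+\kappa^2)\big)^2\log(2/\zeta)$ since $14\ge 13.5$; and substituting $\delta=1/(1+\kappa^2)$ together with $\varepsilon'$ into the degree bound of Theorem~\ref{thm:main1}, using $2/\delta-1 = 2\kappa^2+1$, $2(1/\delta-1)=2\kappa^2$ and $1/(1-\delta)=1+\kappa^{-2}$, reproduces the denominator of $\mathcal N$ exactly while the (slightly loosened) constant inside the logarithm makes $\mathcal N$ an upper bound for the required degree. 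The only real obstacle is this bookkeeping --- tracking how the $\delta$-parameterization of Theorem~\ref{thm:main1} maps onto the $\kappa$-parameterization and confirming the slack in the constants --- together with the clean but essential eigenvalue lower bound $\lambda\ge\delta$ that keeps $|\log\det B|$ finite and controllable; the rest follows directly from the exact error decomposition.
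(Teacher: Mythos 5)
Your proposal is correct and follows essentially the same route as the paper's (very terse) proof: reduce to Theorem~\ref{thm:main1} applied to $B$ via the identity $2\log|\det C| = \log\det B + d\log(\sigma_{\min}^2+\sigma_{\max}^2)$, convert multiplicative to additive error using $|\log\det B|\le d\log(1+\kappa^2)$, and verify that $\mathcal M,\mathcal N$ absorb the rescaled accuracy $\varepsilon'=2\varepsilon/\log(1+\kappa^2)$. You in fact supply more detail than the paper does, including the explicit parameter translation $1/\delta = 1+\kappa^2$ and the constant checks.
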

\begin{proof}
The proof of Theorem \ref{thm:main2} is quite straightforward using Theorem \ref{thm:main1} for $B$ with the facts that
$$2\log |\det C| = \log \det B + d \log{(\sigma_{\min}^2+\sigma_{\max}^2)}$$
and $|\log \det B| \leq d \log \left(1+\frac{\sigma_{\max}^2}{\sigma_{\min}^2}\right)$.
\end{proof}

We remark that the condition number $\sigma_{\max}/\sigma_{\min}$ decides the complexity of {\bf Algorithm \ref{alg2}}. }
As one can expect, the approximation quality and algorithm complexity become worse for matrices
with very large condition numbers, as the Chebyshev approximation for the function $\log x$ near
the point $0$ is more challenging and requires higher degree approximations.

When $\sigma_{\max}\geq1$ and $\sigma_{\min}\leq 1$, i.e. we have mixed signs
for logs of the singular values, a multiplicative error bound (as stated in Theorem \ref{thm:main1})
can not be obtained since the log-determinant can be zero in the worst case.
On the other hand, when $\sigma_{\max}<1$ or $\sigma_{\min}>1$, we further show that the above
algorithm achieves an $\varepsilon$-multiplicative approximation guarantee, as stated in the
following corollaries.

\begin{corollary}\label{cor1}
Given $\varepsilon, \zeta\in (0,1)$, consider the following inputs for {\bf Algorithm \ref{alg2}}:
\begin{itemize}
\item $C \in \mathbb{R}^{d \times d}$ be a matrix such that
singular values
are in the interval $[\sigma_{\min}, \sigma_{\max}]$ for some $\sigma_{\max}<1$.
\item $m \ge \mathcal M \left( {\varepsilon\log{\frac1{\sigma_{\max}}}} , \frac{\sigma_{\max}}{\sigma_{\min}},\zeta\right)$
\item $n \ge \mathcal N \left( {\varepsilon\log{\frac1{\sigma_{\max}}}} , \frac{\sigma_{\max}}{\sigma_{\min}} \right)$
\end{itemize}
Then, it follows that
\begin{align*}
\Pr \left[ \ \left| \log{ \left| \det C \right| } - \Gamma \right| \le  \varepsilon \left| \log{ \left| \det C \right| } \right| \right] \ge 1 - \zeta
\end{align*}
where $\Gamma$ is the output of {\bf Algorithm \ref{alg2}}.
\end{corollary}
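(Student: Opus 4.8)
The plan is to obtain the multiplicative bound as an immediate consequence of the additive bound in Theorem~\ref{thm:main2}, run with a rescaled accuracy parameter. Concretely, I would invoke Theorem~\ref{thm:main2} with the substitution $\varepsilon' := \varepsilon \log \frac{1}{\sigma_{\max}}$. The hypotheses of the corollary on the sampling number and polynomial degree are exactly $m \ge \mathcal M(\varepsilon', \sigma_{\max}/\sigma_{\min}, \zeta)$ and $n \ge \mathcal N(\varepsilon', \sigma_{\max}/\sigma_{\min})$, so Theorem~\ref{thm:main2} directly supplies, with probability at least $1-\zeta$, the additive guarantee
\[
\left| \log \left| \det C \right| - \Gamma \right| \le \varepsilon' d = \varepsilon\, d \, \log \frac{1}{\sigma_{\max}}.
\]
The entire remaining task is to show that this right-hand side is dominated by $\varepsilon \left| \log \left| \det C \right| \right|$.

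The key ingredient is a lower bound on $\left| \log \left| \det C \right| \right|$ that holds precisely because $\sigma_{\max} < 1$. Writing $\sigma_1, \dots, \sigma_d$ for the singular values of $C$, we have $\log \left| \det C \right| = \sum_{i=1}^d \log \sigma_i$. Since each $\sigma_i \le \sigma_{\max} < 1$, every summand satisfies $\log \sigma_i \le \log \sigma_{\max} < 0$, so all terms share the same sign and
\[
\left| \log \left| \det C \right| \right| = \sum_{i=1}^d \log \frac{1}{\sigma_i} \ge d \, \log \frac{1}{\sigma_{\max}}.
\]
This is the one step that genuinely uses the hypothesis $\sigma_{\max} < 1$: with singular values straddling $1$ the summands could cancel and the log-determinant could vanish, which is exactly why a multiplicative guarantee is impossible in the general mixed-sign setting treated by Theorem~\ref{thm:main2}.

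Combining the two displays completes the argument: on the event of probability at least $1-\zeta$ furnished by Theorem~\ref{thm:main2},
\[
\left| \log \left| \det C \right| - \Gamma \right| \le \varepsilon\, d \, \log \frac{1}{\sigma_{\max}} \le \varepsilon \left| \log \left| \det C \right| \right|,
\]
which is the claimed bound. I expect no serious obstacle here, since the heart of the proof is the single sign observation above; the only point requiring care is the bookkeeping that $\varepsilon' = \varepsilon \log \frac{1}{\sigma_{\max}}$ is an admissible accuracy for Theorem~\ref{thm:main2}. Positivity of $\varepsilon'$ is immediate from $\sigma_{\max} < 1$, and I would additionally verify (or restrict to) the regime in which $\varepsilon' \in (0,1)$, noting that when $\varepsilon \log \frac1{\sigma_{\max}}$ is large the target additive accuracy is coarse and the functions $\mathcal M, \mathcal N$ only relax, so the estimate is never harder to meet than in the sharp case.
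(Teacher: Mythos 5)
Your proposal is correct and follows essentially the same route as the paper: both apply Theorem~\ref{thm:main2} with the rescaled accuracy $\varepsilon_0=\varepsilon\log\frac{1}{\sigma_{\max}}$ and then use the lower bound $\left|\log\left|\det C\right|\right|\ge d\log\frac{1}{\sigma_{\max}}$, which holds because all singular values lie below $1$, to convert the additive guarantee into the multiplicative one. Your write-up is in fact somewhat more careful than the paper's, which leaves the final verification of the $m,n$ conditions implicit and glosses over the admissibility of the rescaled $\varepsilon_0$.
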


\begin{corollary}\label{cor2}
Given $\varepsilon, \zeta\in (0,1)$,
consider the following inputs for {\bf Algorithm \ref{alg2}}:
\begin{itemize}
\item $C \in \mathbb{R}^{d \times d}$ be a matrix such that
singular values
are in the interval $[\sigma_{\min}, \sigma_{\max}]$ for some $\sigma_{\min}>1$.
\item $m \ge \mathcal M \left(\varepsilon\log \sigma_{\min}, \frac{\sigma_{\max}}{\sigma_{\min}},\zeta\right)$
\item $n \ge \mathcal N \left(\varepsilon\log \sigma_{\min}, \frac{\sigma_{\max}}{\sigma_{\min}} \right)$
\end{itemize}
Then, it follows that
\begin{align*}
\Pr \left[ \ \left| \log{  \det C } - \Gamma \right| \le  \varepsilon  \log{  \det C }  \right] \ge 1 - \zeta
\end{align*}
where $\Gamma$ is the output of {\bf Algorithm \ref{alg2}}.
\end{corollary}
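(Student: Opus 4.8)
The plan is to derive Corollary \ref{cor2} as a direct consequence of the additive bound in Theorem \ref{thm:main2}, upgrading it to a multiplicative guarantee using the fact that every singular value exceeds one. The key structural observation is that the accuracy parameter fed to Theorem \ref{thm:main2} is not $\varepsilon$ itself but the rescaled quantity $\varepsilon \log \sigma_{\min}$, and this rescaling is precisely what converts the $\varepsilon' d$ additive slack into an $\varepsilon \log \det C$ multiplicative one.

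First I would invoke Theorem \ref{thm:main2} with effective accuracy $\varepsilon' := \varepsilon \log \sigma_{\min}$ and condition number $\kappa = \sigma_{\max}/\sigma_{\min}$. Since the hypotheses $m \ge \mathcal M(\varepsilon', \kappa, \zeta)$ and $n \ge \mathcal N(\varepsilon', \kappa)$ are exactly the inputs assumed in the corollary, Theorem \ref{thm:main2} guarantees that, with probability at least $1 - \zeta$,
$$\left| \log |\det C| - \Gamma \right| \le \varepsilon' d = \varepsilon \left( \log \sigma_{\min} \right) d.$$

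Next I would lower-bound the log-determinant. Writing $|\det C| = \prod_{i=1}^d \sigma_i$ for the singular values $\sigma_i \in [\sigma_{\min}, \sigma_{\max}]$ and using $\sigma_i \ge \sigma_{\min} > 1$, one gets $\log |\det C| = \sum_{i=1}^d \log \sigma_i \ge d \log \sigma_{\min} > 0$, so that $d \log \sigma_{\min} \le \log |\det C|$. Substituting this into the additive bound above and using $\varepsilon > 0$, on the same probability-$(1-\zeta)$ event we obtain $\left| \log |\det C| - \Gamma \right| \le \varepsilon \log |\det C|$; identifying $\log \det C = \log |\det C|$ (legitimate since $|\det C| > 1$ forces the log-determinant to be positive) completes the argument.

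I do not expect a genuine obstacle here, as the corollary is essentially a scaling artifact of Theorem \ref{thm:main2}; the only points requiring care are bookkeeping ones. I must make sure that $\sigma_{\min} > 1$ is used in the correct direction---it both certifies $\log \sigma_{\min} > 0$ (so that $\varepsilon'$ is a valid positive accuracy level and the substitution preserves the inequality) and makes $\log |\det C|$ strictly positive (so the multiplicative statement is meaningful rather than sign-degenerate). The mirror-image case $\sigma_{\max} < 1$ of Corollary \ref{cor1} would be handled identically, except the chain of inequalities reverses and the natural rescaling becomes $\varepsilon \log(1/\sigma_{\max})$.
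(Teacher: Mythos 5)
Your proposal is correct and follows essentially the same route as the paper's own proof: both apply Theorem \ref{thm:main2} with the rescaled accuracy $\varepsilon_0 = \varepsilon \log \sigma_{\min}$ (the paper writes it as $\frac{\varepsilon}{2}\log \sigma_{\min}^2$) and then use $\log|\det C| \ge d \log \sigma_{\min} > 0$ to convert the additive bound into the multiplicative one. No substantive difference.
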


The proofs of the above corollaries are given in the supplementary material due to the space limitation.

\subsection{Application to Counting Spanning Trees}\label{sec:spanning}

We apply {\bf Algorithm \ref{alg2}} to a concrete problem,
where we study counting the number of spanning trees in
a simple undirected graph $G = (V , E)$ where
there exists a vertex $i^*$ such that $(i^*,j)\in E$ for all $j\in V\setminus \{i^*\}$.
Counting spanning trees is one of classical well-studied counting problems, and
also necessary in machine learning applications, e.g., tree mixture models \cite{meila2001learning,anandkumar2012learning}.
We denote
the maximum and average degrees of vertices in $V\setminus \{i^*\}$ by $\Delta_{\tt max}$
and $\Delta_{\tt avg}>1$, respectively.
In addition, we let $L(G)$ denote the Laplacian matrix of $G$. 
Then, from Kirchhoff's matrix-tree theorem, the number of spanning tree $\tau{(G)}$ is equal to
$$
\tau(G)= \det L(i^*),
$$
where
$L( i^*)$ is the $(|V|-1) \times (|V|-1)$ sub matrix of $L(G)$ that is
obtained by eliminating the row and column corresponding to $i^*$.
Now, it is easy to check that eigenvalues of $L(i^*)$ are in $[1,2\Delta_{\tt max}-1]$.
Under these observations, we derive the following corollary.

\begin{corollary}\label{cor:tree}
Given $0<\varepsilon<\frac2{\Delta_{\tt avg}-1}, \zeta \in (0,1)$,
consider the following inputs for {\bf Algorithm \ref{alg2}}:
\begin{itemize}
\item $C= L(i^*)$
\item $m \ge \mathcal M \left( \frac{\varepsilon(\Delta_{\tt avg}-1)}{4}, 2 \Delta_{\tt max} -1, \zeta\right) $
\item $n \ge \mathcal N \left(\frac{\varepsilon(\Delta_{\tt avg}-1)}{4}, 2 \Delta_{\tt max} - 1 \right)$
\end{itemize}
Then, it follows that
$$
\Pr \left[ |\log \tau(G) - \Gamma | \le \varepsilon \log \tau(G) \right] \ge 1 - \zeta
$$
where $\Gamma$ is the output of {\bf Algorithm \ref{alg2}}.
\end{corollary}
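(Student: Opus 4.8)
The plan is to derive the multiplicative guarantee from the additive guarantee of Theorem~\ref{thm:main2} together with a deterministic spectral lower bound on $\log \tau(G)$. First I would fix the inputs handed to {\bf Algorithm \ref{alg2}}. Since $L(i^*)$ is symmetric positive definite with eigenvalues in $[1,2\Delta_{\tt max}-1]$, its singular values coincide with its eigenvalues, so I take $\sigma_{\min}=1$ and $\sigma_{\max}=2\Delta_{\tt max}-1$; hence the condition number is $\kappa=\sigma_{\max}/\sigma_{\min}=2\Delta_{\tt max}-1$ and the dimension is $d=|V|-1$. Applying Theorem~\ref{thm:main2} to $C=L(i^*)$ with the reduced tolerance $\varepsilon' := \tfrac{\varepsilon(\Delta_{\tt avg}-1)}{4}$ — which lies in $(0,1)$ because the hypothesis $\varepsilon<\tfrac{2}{\Delta_{\tt avg}-1}$ forces $\varepsilon'<\tfrac12$ — the stated bounds $m\ge\mathcal M(\varepsilon',2\Delta_{\tt max}-1,\zeta)$ and $n\ge\mathcal N(\varepsilon',2\Delta_{\tt max}-1)$ yield $\left|\log\det L(i^*)-\Gamma\right|\le\varepsilon' d=\tfrac{\varepsilon(\Delta_{\tt avg}-1)}{4}(|V|-1)$ with probability at least $1-\zeta$. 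By Kirchhoff's theorem $\tau(G)=\det L(i^*)$, so this is precisely an additive error bound on $|\log\tau(G)-\Gamma|$.

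Second, I would observe that to convert this additive bound into the claimed multiplicative one it is enough to prove the deterministic inequality $\log\tau(G)\ge\tfrac14(\Delta_{\tt avg}-1)(|V|-1)$: granting it, the additive error above equals $\varepsilon\cdot\tfrac14(\Delta_{\tt avg}-1)(|V|-1)\le\varepsilon\log\tau(G)$, which is exactly the desired bound. To prove the lower bound I would write $\log\tau(G)=\sum_{i=1}^{|V|-1}\log\lambda_i$, where $\lambda_i\in[1,2\Delta_{\tt max}-1]$ are the eigenvalues of $L(i^*)$, and use the trace identity ${\tt tr}\,L(i^*)=\sum_{j\ne i^*}\deg(j)=(|V|-1)\Delta_{\tt avg}$, so that $\sum_i(\lambda_i-1)=(|V|-1)(\Delta_{\tt avg}-1)$. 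Since $\log$ is concave on $[1,M]$ with $\log 1=0$, where $M=2\Delta_{\tt max}-1$, the secant line gives $\log\lambda\ge\tfrac{\log M}{M-1}(\lambda-1)$, and summing over $i$ produces $\log\tau(G)\ge\tfrac{\log(2\Delta_{\tt max}-1)}{2(\Delta_{\tt max}-1)}(\Delta_{\tt avg}-1)(|V|-1)$.

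Combining the two parts on the high-probability event then completes the argument, and the only thing separating this from the statement is the constant: the secant bound supplies the factor $\tfrac{\log(2\Delta_{\tt max}-1)}{2(\Delta_{\tt max}-1)}$ where the theorem asks for $\tfrac14$. I expect this to be the main obstacle. The required inequality $\tfrac{\log(2\Delta_{\tt max}-1)}{2(\Delta_{\tt max}-1)}\ge\tfrac14$ holds comfortably for small maximum degree, but the left-hand side decays like $\tfrac{\log\Delta_{\tt max}}{\Delta_{\tt max}}$, so an argument using only the spectral interval and the trace is too weak for large $\Delta_{\tt max}$. Closing the gap would require exploiting the graph-Laplacian structure rather than the eigenvalue range alone: writing $L(i^*)=I+L'$, where $L'$ is the Laplacian of the subgraph induced on $V\setminus\{i^*\}$ and $\deg'(j)$ is the degree of $j$ in that subgraph, one has the second-moment control $\sum_i(\lambda_i-1)^2={\tt tr}\,L'^2=\sum_{j\ne i^*}\deg'(j)^2+{\tt tr}\,L'\le\Delta_{\tt max}\sum_i(\lambda_i-1)$, which limits how many eigenvalues can concentrate near $2\Delta_{\tt max}-1$ and is the leverage I would use to sharpen the constant to the stated form.
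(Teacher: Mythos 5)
Your reduction is exactly the paper's: apply Theorem~\ref{thm:main2} with tolerance $\varepsilon'=\tfrac{\varepsilon(\Delta_{\tt avg}-1)}{4}$ to get the additive bound $\varepsilon'(|V|-1)$ with probability $1-\zeta$, and then convert it to a multiplicative bound via the deterministic inequality $\log\tau(G)\ge\tfrac14(\Delta_{\tt avg}-1)(|V|-1)$. That first half is correct and complete. The gap is precisely where you say it is — you never prove the lower bound on $\log\tau(G)$ — and the issue is that you are trying to prove a combinatorial fact with purely spectral information. The secant bound gives only $\tfrac{\log(2\Delta_{\tt max}-1)}{2(\Delta_{\tt max}-1)}(\Delta_{\tt avg}-1)(|V|-1)$, which fails for $\Delta_{\tt max}$ large, and your proposed second-moment repair cannot close the gap either: the moment data you use are $\sum_i(\lambda_i-1)=S:=(\Delta_{\tt avg}-1)(|V|-1)$ and $\sum_i(\lambda_i-1)^2\le\Delta_{\tt max}S$, and the spectrum with $S/\Delta_{\tt max}$ eigenvalues equal to $1+\Delta_{\tt max}$ and the rest equal to $1$ satisfies both constraints while giving $\sum_i\log\lambda_i=\tfrac{S}{\Delta_{\tt max}}\log(1+\Delta_{\tt max})\ll S/4$. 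So no argument using only the eigenvalue interval together with the first two moments can yield the constant $\tfrac14$; you need the Laplacian/graph structure in an essential way.

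The paper supplies the missing ingredient as a one-line combinatorial bound: since $i^*$ is adjacent to every other vertex, $\tau(G)\ge 2^{(|V|-1)(\Delta_{\tt avg}-1)/2}$, i.e.\ $\log\tau(G)\ge\tfrac{\log 2}{2}S>\tfrac{S}{4}$, which is exactly what the constant $\tfrac14$ in the statement is calibrated against. The exponent $(|V|-1)(\Delta_{\tt avg}-1)/2$ is the number of edges of the induced subgraph $H=G[V\setminus\{i^*\}]$, and the bound is the statement that the cone over $H$ has at least $2^{|E(H)|}$ spanning trees (equivalently, by the matrix-tree theorem, $\det(I+L_H)$ — the number of rooted spanning forests of $H$ — is at least $2^{|E(H)|}$; this follows for instance by deletion--contraction induction on the edges of $H$). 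If you replace your secant-line and second-moment paragraphs with this counting lemma, your argument becomes the paper's proof. One further remark: the appendix proof in the paper sets $\varepsilon_0=\varepsilon(\Delta_{\tt avg}-1)/2$, which is inconsistent with the $/4$ in the statement and would require $\log\tau(G)\ge S/2$, more than the $2^{S/2}$ bound delivers; the $/4$ version you worked with is the one that actually closes.
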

The proof of the above corollary is given in the supplementary material due to the space limitation.
We remark that the running time of {\bf Algorithm \ref{alg2}} with inputs in the above theorem
is $O(nm\Delta_{\tt avg} |V|)$. Therefore, for $\varepsilon, \zeta =\Omega(1)$ and $\Delta_{\tt avg}=O(1)$, i.e.,
$G$ is sparse, one can choose $n,m=O(1)$ so that
the running time of {\bf Algorithm \ref{alg2}} is $O(|V|)$.

\section{Proof of Theorem \ref{thm:main1}}\label{sec:proof}
In order to prove Theorem \ref{thm:main1}, we first introduce some necessary
background and lemmas on
error bounds of Chebyshev approximation and Hutchinson method we introduced in Section \ref{sec:cheby}
and Section \ref{sec:trace}, respectively.

\subsection{Convergence Rate for Chebyshev Approximation}
Intuitively, one can expect that the approximated Chebyshev polynomial converges to its original function as degree $n $ goes to $\infty$.
Formally, the following error bound is known \cite{berrut2004barycentric, xiang2010error}.

\begin{theorem}\label{thm:ckn}
Suppose $f$ is analytic with $\left| f ( z ) \right| \le M$ in the region bounded by the ellipse with foci $\pm1$ and major and minor semiaxis lengths summing to $K > 1$. Let $p_n$ denote the interpolant of $f$ of degree $n$ in th Chebyshev points as defined in section \ref{sec:cheby}, then for each $n \ge 0$,
\begin{align*}
\max_{x \in [-1,1]} \left| f(x) - p_n(x) \right| \le  \frac{4M}{\left( K-1 \right) K^{n}}
\end{align*}
\end{theorem}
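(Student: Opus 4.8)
The plan is to follow the classical route for Chebyshev interpolation of analytic functions: bound the Chebyshev coefficients of $f$ using analyticity inside the ellipse, then control the interpolation error through the aliasing identities and a geometric summation. Throughout I write $\rho = K$ for the sum of the semiaxis lengths, so the ellipse in the hypothesis is the Bernstein ellipse $E_\rho = \{ \frac12(w + w^{-1}) : |w| = \rho \}$, namely the image of the circle $|w| = \rho$ under the Joukowski map $x = \frac12(w + w^{-1})$.

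First I would introduce the Chebyshev series $f(x) = \sum_{k=0}^\infty a_k T_k(x)$ with $a_k = \frac{2}{\pi} \int_{-1}^1 \frac{f(x) T_k(x)}{\sqrt{1-x^2}}\, dx$ for $k \ge 1$ (and the usual halving for $a_0$). Substituting $x = \cos\theta$ turns this into a Fourier integral, and the further substitution $w = e^{i\theta}$ rewrites $a_k$ as a contour integral of $f(\frac12(w+w^{-1}))\, w^{k-1}$ over the unit circle. Since $f$ is analytic inside $E_\rho$, I can deform the contour outward to $|w| = r$ for any $1 < r < \rho$; using $|f| \le M$ on $E_\rho$ and letting $r \to \rho$ yields the decay estimate $|a_k| \le 2M \rho^{-k}$ for $k \ge 1$. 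This contour-deformation step is the analytic heart of the argument and is where the ellipse hypothesis enters.

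Next I would pass from the series to the degree-$n$ interpolant $p_n = \sum_{k=0}^n c_k T_k$ at the Chebyshev points, which are the roots of $T_{n+1}$. The identity $T_{2(n+1) \pm k}(x_j) = T_k(x_j)$ (up to sign) on the interpolation grid gives the aliasing formula expressing each $c_k$ as $a_k$ plus an infinite fold of the higher coefficients $a_{2m(n+1) \pm k}$. Writing the error as $f - p_n = \sum_{k > n} a_k T_k - \sum_{k=0}^n (c_k - a_k) T_k$ and using $|T_k(x)| \le 1$ on $[-1,1]$, the truncation tail contributes $\sum_{k=n+1}^\infty 2M\rho^{-k} = \frac{2M \rho^{-n}}{\rho - 1}$, and the aliasing defect, after summing the same geometric series over the folded indices, contributes at most the same amount. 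Adding the two gives $\max_{x} |f(x) - p_n(x)| \le \frac{4M\rho^{-n}}{\rho - 1} = \frac{4M}{(K-1)K^n}$, as claimed.

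The main obstacle is the coefficient bound in the second paragraph: one must justify the deformation into the complex plane (requiring analyticity of the composition $f(\frac12(w+w^{-1}))$ in the annulus together with a careful limit $r \to \rho$), and then track constants precisely enough that the truncation and aliasing contributions each come out to exactly $\frac{2M\rho^{-n}}{\rho-1}$, so that their sum reproduces the stated factor of $4$. The remaining bookkeeping — the endpoint conventions for $a_0$ and $c_0$ and confirming the signs in the aliasing identity — is routine and does not affect the final bound.
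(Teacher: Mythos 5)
The paper does not actually prove Theorem~\ref{thm:ckn}: it is quoted as a known result with citations to \cite{berrut2004barycentric, xiang2010error}, so there is no in-paper proof to compare against. Your argument is the standard Bernstein-ellipse proof of exactly this bound (it is essentially Theorem~8.2 of Trefethen's \emph{Approximation Theory and Approximation Practice}), and it is correct: the contour deformation of $a_k=\frac{1}{\pi i}\oint_{|w|=1} f\bigl(\tfrac12(w+w^{-1})\bigr)w^{-1-k}\,dw$ out to $|w|=r<\rho$ gives $|a_k|\le 2M\rho^{-k}$, the tail sum gives $2M\rho^{-n}/(\rho-1)$, and the aliasing defect contributes at most the same tail again, yielding the stated $4M/((K-1)K^n)$ with $\rho=K$ equal to the sum of the semiaxes. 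Two small points to nail down when writing this up: (i) the paper's nodes $x_k=\cos\bigl(\pi(k+\tfrac12)/(n+1)\bigr)$ are the roots of $T_{n+1}$ (first-kind points), so the aliasing identity you need is $\cos\bigl((2m(n+1)\pm k)\theta_j\bigr)=\pm\cos(k\theta_j)$ with the signs determined by $2(n+1)\theta_j\equiv\pi \pmod{2\pi}$, and you should check that the folded indices cover each $j>n$ at most once so the defect really is bounded by the same geometric tail; (ii) since $M$ bounds $|f|$ only on the open region, the coefficient bound is obtained for every $r<\rho$ and then by letting $r\to\rho$. Neither affects the conclusion.
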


To prove Theorem \ref{thm:main1} and Theorem \ref{thm:main2},
we are in particular interested in
$$f(x)=\log (1-x),\qquad\mbox{for}~x\in[\delta, 1-\delta].$$
Since Chebyshev approximation is defined in the interval $[-1,1]$, e.g., see Section \ref{sec:cheby}, 
one can use the following linear mapping $g : [\delta, 1-\delta] \rightarrow [-1,1]$ 
so that
\begin{align*}
&\max_{x \in [-1,1]} \left| ( f \circ g )(x)  - p_n(x) \right| \\
= &\max_{x \in [\delta,1-\delta]} \left| f \left( x \right) - (p_n\circ g^{-1}) (x)  \right| 
\end{align*}
For notational convenience, we use $p_n(x)$ to denote $(p_n \circ g^{-1})(x)$ in what follows. 

We choose the ellipse region, denoted by $\mathcal E_{K}$, in the complex plane with foci $\pm 1$ and its semimajor axis length is $1/(1-\delta)$ where $f \circ g$ is analytic on and inside. The length of semimajor axis of the ellipse is equal to $\sqrt{\left( 1/(1-\delta) \right)^2 - 1}$. Hence, the convergence rate $K$ can be set to
$$K = \frac1{1-\delta} + \sqrt{\left( \frac1{1-\delta}\right)^2-1}=\frac{\sqrt{2-\delta}+\sqrt{\delta}}{\sqrt{2-\delta}-\sqrt{\delta}}>1$$
The constant $M$ can be also obtained 
using the fact that 
$\left| \log z \right| = \left| \log \left| z \right| + i \arg \left( z \right) \right| \le \sqrt{\left( \log \left| z\right| \right)^2 + \pi ^2}$ for any $z \in \mathbb{C}$ as follows:
\begin{align*}
\max_{z \in\mathcal E_K} \left| (f \circ g) (z)\right| = &\max_{z \in \mathcal E_K} \left| (f \circ g) (z)\right| = \max_{z \in \mathcal E_K} \left| \log \left( 1 - g(z) \right) \right| \\
\le &\max_{z \in \mathcal E_K} \sqrt{\left( \log \left| 1 - g(z) \right| \right)^2 + \pi ^2} \\
= &\sqrt{ \log^2 \left( 2 \left( \frac1{\delta} - 1\right)\right)  + \pi^2} \le 5 \log \left( 2\left(\frac1{\delta} - 1 \right) \right) := M.
\end{align*}

Hence, for $x \in [\delta, 1-\delta]$,
$$
\left| \log \left( 1-x \right) - p_n(x) \right| \le \frac{20 \log \left( 2\left( \frac1{\delta} - 1 \right) \right)}{\left( K-1\right) K^{n}}
$$
Under these observations, we establish the following lemma that is a `matrix version' of Theorem \ref{thm:ckn}.
\begin{lemma}\label{thm:dckn} Let $B \in \mathbb{R}^{d \times d} $ be a positive definite matrix whose eigenvalues are in $[\delta, 1-\delta]$
for $\delta \in (0, 1/2)$.
Then, it holds that
$$
\left| \ \log \det B - {\tt tr} \big( p_n(I-B) \big) \right| \le \frac{20 d \log \left( 2\left(\frac1{\delta} - 1 \right) \right)}{\left( K-1\right) K^{n}}
$$
where $K = \frac{\sqrt{2-\delta} + \sqrt{\delta}}{\sqrt{2-\delta} - \sqrt{\delta}}$.
\end{lemma}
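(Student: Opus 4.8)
The plan is to reduce this matrix statement entirely to the scalar Chebyshev estimate derived just above, namely that for every $x \in [\delta, 1-\delta]$ one has $\left| \log(1-x) - p_n(x) \right| \le \frac{20 \log\left( 2\left( 1/\delta - 1 \right) \right)}{(K-1)K^{n}}$. The whole lemma will then follow by summing this uniform scalar bound over the spectrum of $A = I - B$, so no genuinely new analytic work is needed.

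First I would pin down the spectrum of $A = I - B$. Since $B$ is symmetric positive definite with eigenvalues $\mu_1, \dots, \mu_d \in [\delta, 1-\delta]$, the matrix $A = I - B$ is symmetric with eigenvalues $\lambda_i = 1 - \mu_i$, and the symmetry of the interval about $1/2$ guarantees $\lambda_i \in [\delta, 1-\delta]$ as well. This is the one place where the hypotheses $\delta \in (0, 1/2)$ and the particular interval $[\delta, 1-\delta]$ are essential: they ensure that every eigenvalue of $A$ lands inside the interval on which the scalar approximation bound is valid.

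Next I would rewrite both quantities as sums over these eigenvalues. From $\det B = \prod_{i=1}^d \mu_i = \prod_{i=1}^d (1-\lambda_i)$ I get $\log \det B = \sum_{i=1}^d \log(1-\lambda_i)$, and since $A$ is symmetric it admits an orthogonal diagonalization, so the spectral-mapping identity ${\tt tr}\big( p_n(A) \big) = \sum_{i=1}^d p_n(\lambda_i)$ holds, exactly the fact already invoked in Section \ref{sec:cheby}. Subtracting termwise yields
\begin{align*}
\log \det B - {\tt tr}\big( p_n(I - B) \big) = \sum_{i=1}^d \left( \log(1-\lambda_i) - p_n(\lambda_i) \right).
\end{align*}

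Finally I would apply the triangle inequality and the scalar bound to each of the $d$ summands. Because every $\lambda_i \in [\delta, 1-\delta]$, each term is at most $\frac{20 \log\left( 2\left( 1/\delta - 1 \right)\right)}{(K-1)K^{n}}$, and there are $d$ of them, which produces the claimed factor of $d$ and closes the argument. I do not expect a hard step here: the substance of the lemma lies in the scalar estimate obtained from Theorem \ref{thm:ckn}, and the only point requiring care is the spectrum-containment check in the first step, which lets the uniform scalar bound be transferred term-by-term through the spectral decomposition.
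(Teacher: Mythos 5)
Your proposal is correct and follows essentially the same route as the paper's proof: diagonalize $A = I-B$, write both $\log\det B$ and ${\tt tr}(p_n(A))$ as sums over the eigenvalues $\lambda_i = 1-\mu_i \in [\delta,1-\delta]$, and apply the uniform scalar bound from Theorem \ref{thm:ckn} termwise via the triangle inequality. The only difference is that you make explicit the (easy) check that the spectrum of $I-B$ stays in $[\delta,1-\delta]$, which the paper leaves implicit.
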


\begin{proof}
Let $ \lambda_1, \lambda_2 , \cdots , \lambda_d\in [\delta,1-\delta]$ be eigenvalues of matrix $A = I-B$.
Then, we have
\begin{align*}
\left| \log \det( I - A) - {\tt tr} \left( p_n(A) \right) \right| &= \  \left| {\tt tr} \left( \log( I - A) \right) - {\tt tr} \left( p_n(A) \right) \right| \\
&= \  \left| \sum_{i=1}^{d} \log(1- \lambda_i)  -  \sum_{i=1}^d p_n(\lambda_i) \right| \\
&\le \  \sum_{i=1}^{d} \left|  \log(1- \lambda_i)  -  p_n(\lambda_i)  \right| \\
&\le \  \sum_{i=1}^{d}  \frac{20 \log \left( 2 \left( \frac1{\delta} - 1\right) \right)}{ \left( K - 1 \right) K^{n}}
\end{align*}
where we use Theorem \ref{thm:ckn}. This completes the proof of Lemma \ref{thm:dckn}.
\end{proof}

\subsection{Approximation Error of Hutchinson Method} 
In this section, we use the same notation, e.g., $f, p_n$, used in the previous section
and we analyze the Hutchinson's trace estimator ${\tt tr}_m (\cdot)$ defined in Section \ref{sec:trace}.
To begin with, we state the following theorem that is proven in \cite{roosta2013improved}.

\begin{theorem}\label{thm:trace} Let $A \in \mathbb{R}^{d \times d}$ be a positive definite or negative definite matrix.
Given $\varepsilon_0 , \zeta_0 \in (0,1)$, it holds that
\begin{align*}
\Pr \left[ \left| {\tt tr}_m(A) - {\tt tr}(A) \right| \le \varepsilon_0 \  {\tt tr}(A) \right] \ge 1 - \zeta_0
\end{align*}
if sampling number $m$ is no smaller than $6 \ \varepsilon_0^{-2}  \log(2 / \zeta_0)$.
\end{theorem}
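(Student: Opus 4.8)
The goal is to establish a multiplicative concentration bound for the Hutchinson estimator ${\tt tr}_m(A)$ around its mean ${\tt tr}(A)$, assuming $A$ is definite (so that ${\tt tr}(A)$ has a fixed sign and a multiplicative bound is meaningful). The plan is to exploit the quadratic-form structure $\mathbf{z}^\top A \mathbf{z}$ and derive a Hoeffding-type tail bound. First I would reduce to the positive definite case without loss of generality: if $A$ is negative definite, replace $A$ by $-A$, which flips the sign of both ${\tt tr}_m(A)$ and ${\tt tr}(A)$ and leaves the event $\{|{\tt tr}_m(A)-{\tt tr}(A)|\le\varepsilon_0\,{\tt tr}(A)\}$ invariant. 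So assume $A\succ 0$ with orthonormal eigendecomposition $A = \sum_k \lambda_k \, q_k q_k^\top$, all $\lambda_k>0$.

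The key step is to analyze a single sample $X_i := \mathbf{z}_i^\top A \mathbf{z}_i$. Expanding in the eigenbasis and writing $y = Q^\top \mathbf{z}_i$ (where $Q$ has columns $q_k$), we get $X_i = \sum_k \lambda_k y_k^2 = \sum_{k,\ell} A_{k\ell}(z_i)_k (z_i)_\ell$. Since the Rademacher entries satisfy $(z_i)_k^2 = 1$, the diagonal terms contribute exactly ${\tt tr}(A)$, so that $X_i - {\tt tr}(A) = \sum_{k\ne \ell} A_{k\ell}(z_i)_k(z_i)_\ell$ is an off-diagonal Rademacher chaos. The plan is to control the moment generating function of this centered quadratic form and then apply a Chernoff bound to the average $\frac{1}{m}\sum_i X_i$. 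Crucially, because $A$ is positive definite, each $X_i = \|A^{1/2}\mathbf{z}_i\|^2 \ge 0$, which lets one bound the MGF in terms of the eigenvalues and ultimately in terms of ${\tt tr}(A)$ and $\|A\|_F$ or $\lambda_{\max}$; the positivity is what converts the additive chaos bound into a clean multiplicative statement scaled by ${\tt tr}(A)$.

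I would then combine the per-sample MGF control across the $m$ i.i.d.\ samples, optimize over the Chernoff parameter, and read off the requirement on $m$. Writing the two-sided tail as $\Pr[|{\tt tr}_m(A)-{\tt tr}(A)|>\varepsilon_0\,{\tt tr}(A)] \le 2\exp(-c\,m\,\varepsilon_0^2)$ for an explicit constant $c$, setting the right-hand side to $\zeta_0$ yields $m \ge c^{-1}\varepsilon_0^{-2}\log(2/\zeta_0)$, and the claim is that the constant works out to $c^{-1}=6$. The main obstacle I anticipate is sharpening the constant: getting a Hanson--Wright--style bound is standard, but pinning down the precise factor $6$ (rather than some larger universal constant) requires a careful MGF estimate that uses positive definiteness to bound $\sum_k \lambda_k^2 \le \lambda_{\max}{\tt tr}(A) \le ({\tt tr}(A))^2$, ensuring the variance proxy scales like $({\tt tr}(A))^2$ so that $\varepsilon_0\,{\tt tr}(A)$ is the natural deviation scale. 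Since this sharp constant is exactly the content of the cited result \cite{roosta2013improved}, I would lean on that reference for the tight exponent rather than re-deriving the optimal constant from scratch.
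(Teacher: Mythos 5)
The paper does not actually prove this statement: Theorem~\ref{thm:trace} is imported verbatim from \cite{roosta2013improved}, so there is no internal proof to compare against. Your sketch is a faithful reconstruction of the standard argument behind that reference (reduce to the positive definite case, write $\mathbf{z}^\top A\mathbf{z}-{\tt tr}(A)$ as an off-diagonal Rademacher chaos, control its MGF, and use positive definiteness via $\|A\|_F^2\le({\tt tr}(A))^2$ and $\|A\|_2\le{\tt tr}(A)$ to turn the Hanson--Wright deviation scale into a multiple of ${\tt tr}(A)$), and since you explicitly defer to \cite{roosta2013improved} for the sharp constant $6$, the net logical content of your write-up matches what the paper itself relies on. Two small points. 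First, a generic Hanson--Wright or Hoeffding-for-chaos bound yields $2\exp(-c\,m\,\varepsilon_0^2)$ only with an unspecified universal $c$; the value $c=1/6$ is precisely the nontrivial contribution of the cited work, so your proposal is an outline plus a citation rather than a self-contained proof --- which is acceptable here, but worth being explicit about. Second, your reduction for negative definite $A$ ``leaves the event invariant'' only if the right-hand side is read as $\varepsilon_0\,|{\tt tr}(A)|$; as literally written in the theorem the event is empty when ${\tt tr}(A)<0$. This is really a typo in the paper's statement (and the paper itself applies the theorem with $|{\tt tr}(p_n(A))|$ on the right-hand side in its proof of Theorem~\ref{thm:main1}), but your proof should state the absolute value explicitly when performing the $A\mapsto -A$ reduction.
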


The theorem above provides a lower-bound on the sampling complexity of Hutchinson method,
which is independent of a given matrix $A$. To prove Theorem \ref{thm:main1}, we need an error bound
on ${\tt tr}_m (p_n(A))$. However, in general we may not know whether or not $p_n(A)$ is positive definite
or negative definite. We can guarantee that the eigenvalues of $p_n(A)$ will be negative using the following lemma.

\begin{lemma}\label{lmm:neg}
$p_n(x)$ is a negative-valued polynomial in the interval $[\delta,1-\delta]$ if
\begin{align*}
\frac{20 \log \left( 2\left(\frac1{\delta} - 1 \right) \right)}{\left( K-1\right) K^{n}} \le \log \left( \frac{1}{1-\delta} \right)
\end{align*}
where we recall that $K = \frac{\sqrt{2-\delta} + \sqrt{\delta}}{\sqrt{2-\delta} - \sqrt{\delta}}$.
\end{lemma}

\begin{proof}
From Theorem \ref{thm:ckn}, we have
\begin{align*}
\max_{ \left[ \delta, 1 - \delta \right] } p_n(x) &= \max_{ \left[ \delta, 1 - \delta \right] } f(x) + \left( p_n(x) - f(x) \right) \\
&\le \max_{ \left[ \delta, 1 - \delta \right] } f(x) + \max_{ \left[ \delta, 1 - \delta \right] } \left| p_n(x) - f(x) \right| \\
&\le \log{ \left( 1 - \delta \right) } +  \frac{20 \log \left( 2 (\frac1{\delta} - 1) \right)} { \left( K-1 \right) K^{n}} \le 0,
\end{align*}
where we use $\frac{20 \log \left( 2 (1/{\delta} - 1) \right)}{ \left( K-1 \right) K^{n}} \le -\log({1-\delta})$.
This completes the proof of Lemma \ref{lmm:neg}.
\end{proof}

\subsection{Proof of the Theorem \ref{thm:main1}}\label{sec:pfthm1}
Now we are ready to prove Theorem \ref{thm:main1}.
First, one can check that sampling number $n$ in the condition of Theorem \ref{thm:main1} satisfies
\begin{equation}
\frac{20 \log \left( 2 (\frac1{\delta} - 1) \right)} { \left( K-1 \right) K^{n}} \le \frac{\varepsilon}{2} \log \left( \frac1{1-\delta}\right).\label{eq1}
\end{equation}
Hence, from Lemma \ref{lmm:neg}, it follows that $p_n(A)$ is negative definite where $A = I - B$ and eigenvalues of $B$ are in $[\delta,1-\delta]$.
Hence, we can apply Theorem \ref{thm:trace} as
\begin{align}
\Pr \left[ \left| {\tt tr} \left( p_n(A) \right) - {\tt tr}_m \left( p_n(A) \right) \right| \le  \frac{\varepsilon}{3} \left| {\tt tr} \left( p_n(A) \right) \right| \right] \geq 1- \zeta,\label{eq2}
\end{align}
for $m \ge 54 \varepsilon^{-2}  \log{ \left( \frac2{ \zeta } \right) }$.
In addition, from Theorem \ref{thm:dckn}, we have
\begin{align*}
\left| {\tt tr} \left( p_n(A) \right) \right| - \left| \log \det B \right| 
&\le \ \left| \log \det B - {\tt tr} \left( p_n(A) \right) \right| \\
&\le \ \frac{20 d \log \left( 2 (1/\delta - 1) \right)} { \left( K-1 \right) K^{n}} \\
&\le \  \frac{\varepsilon}{2} d \log \left( \frac1{1-\delta}\right) \le \frac{\varepsilon}{2} \left|\log \det B \right|,
\end{align*}
which implies that
\begin{align}
\left| {\tt tr} \left( p_n(A) \right)\right| \le \left( \frac{\varepsilon}{2} + 1 \right) \left|\log \det B \right| \le \frac{3}{2} \left| \log \det B \right|.\label{eq3}
\end{align}
Combining \eqref{eq1}, \eqref{eq2} and \eqref{eq3} leads to the conclusion of Theorem \ref{thm:main1} as follows:
\begin{align*}
1-\zeta &\le \Pr \left[ \left| {\tt tr} \left( p_n(A) \right) - {\tt tr}_m \left( p_n(A) \right) \right| \le  \frac{\varepsilon}{3} \left| {\tt tr} \left(p_n(A) \right) \right| \right] \\
&\le \Pr \left[ \left| {\tt tr} \left( p_n(A) \right) - {\tt tr}_m \left( p_n(A) \right) \right| \le  \frac{\varepsilon}{2} \left| \log \det B \right| \right] \\
&\le \Pr [ \left| {\tt tr} \left( p_n(A) \right) - {\tt tr}_m \left( p_n(A) \right) \right| + | \log \det B - {\tt tr} \left( p_n(A) \right) | \\
&\qquad\qquad\qquad\le  \frac{\varepsilon}{2} \left| \log \det B \right| + \frac{\varepsilon}{2} \left| \log \det B \right| ] \\
&\le \Pr \left[ \left| \log \det B - {\tt tr}_m \left( p_n(A) \right) \right| \le \varepsilon \left| \log \det B \right| \right]\\
&= \Pr \left[ \left| \log \det B - \Gamma \right| \le \varepsilon \left| \log \det B \right| \right],
\end{align*}
where $\Gamma = {\tt tr}_m \left( p_n(A) \right)$.

\section{Experiments}
\label{sec:exp}

We now study our proposed algorithm on numerical
experiments with simulated and real data.

\subsection{Performance Evaluation and Comparison}
We first investigate the empirical performance of our proposed algorithm
on large sparse random matrices. We generate a random matrix $C\in \mathbb{R}^{d\times d}$,
where the
number of non-zero entries per each row is around $10$.
We first select five non-zero off-diagonal entries in each row with values
uniformly distributed in $[-1, 1]$. To make the matrix symmetric, we
set the entries in transposed positions to the same values.
 Finally, to
guarantee positive definiteness, we set its diagonal entries to absolute
row-sums and add a small weight, $10^{-3}$.

Figure \ref{fig:performance} (a) shows the running time of {\bf Algorithm \ref{alg2}}
from $d=10^3$ to $3\times 10^7$, where we choose $m=10$, $n=15$,
$\sigma_{\min}=10^{-3}$ and $\sigma_{\max}=\|C\|_{1}$. It scales roughly
linearly over a large range of sizes. We use a machine with
3.40 Ghz Intel I7 processor with $24$ GB RAM. It takes only $500$ seconds for
a matrix of size $3\times 10^7$ with $3\times 10^8$ non-zero entries. In Figure
\ref{fig:performance} (b), we study the relative accuracy compared to the exact log-determinant
computation up-to size $3 \times 10^4$. Relative errors are very small, below 0.1\%,
and appear to only improve for higher dimensions.

\begin{figure*}[t]
\begin{center}
{\includegraphics[width = \textwidth]{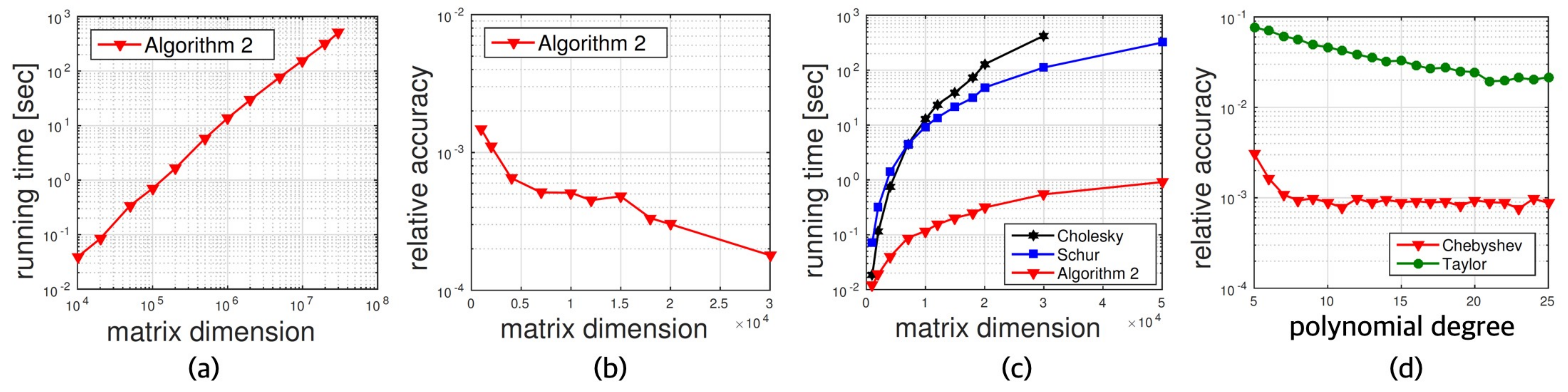}}
\vskip -0.1in
\caption{Performance evaluations of {\bf Algorithm \ref{alg2}} and comparisons with other ones:
(a) running time vs.\ dimension,  (b) relative accuracy, (c) comparison in running time with Cholesky decomposition
and Schur complement and (d) comparison in accuracy with Taylor approximation in \cite{leithead_GP_stoch}.
The relative accuracy means a ratio between the absolute error of
the output of an  approximation algorithm and the actual value of log-determinant.}
\label{fig:performance}
\vskip -0.2in
\end{center}
\vskip -0.25in
\end{figure*}

Under the same setup, we also compare the running time of our algorithm
with other algorithm for computing determinants: Cholesky decomposition and Schur
complement. The latter was used for sparse inverse covariance estimation with
over a million variables \cite{hsieh2013big} and we run the code implemented by the authors.
The running time of the algorithms are
reported in Figure \ref{fig:performance} (c). The proposed algorithm is dramatically faster than
both exact algorithms. We also compare the accuracy of our algorithm to a
related stochastic algorithm that uses Taylor expansions \cite{leithead_GP_stoch}.
For a fair comparison we use a large number of samples, $n=1000$, for both algorithms
to focus on the polynomial approximation errors. The results are reported in Figure
\ref{fig:performance} (d), showing that our algorithm using Chebyshev expansions is superior
in accuracy compared to the one based on Taylor series.

\begin{figure*}[t]
\begin{center}
\vskip -0.1in
\centerline{\includegraphics[width=\textwidth]{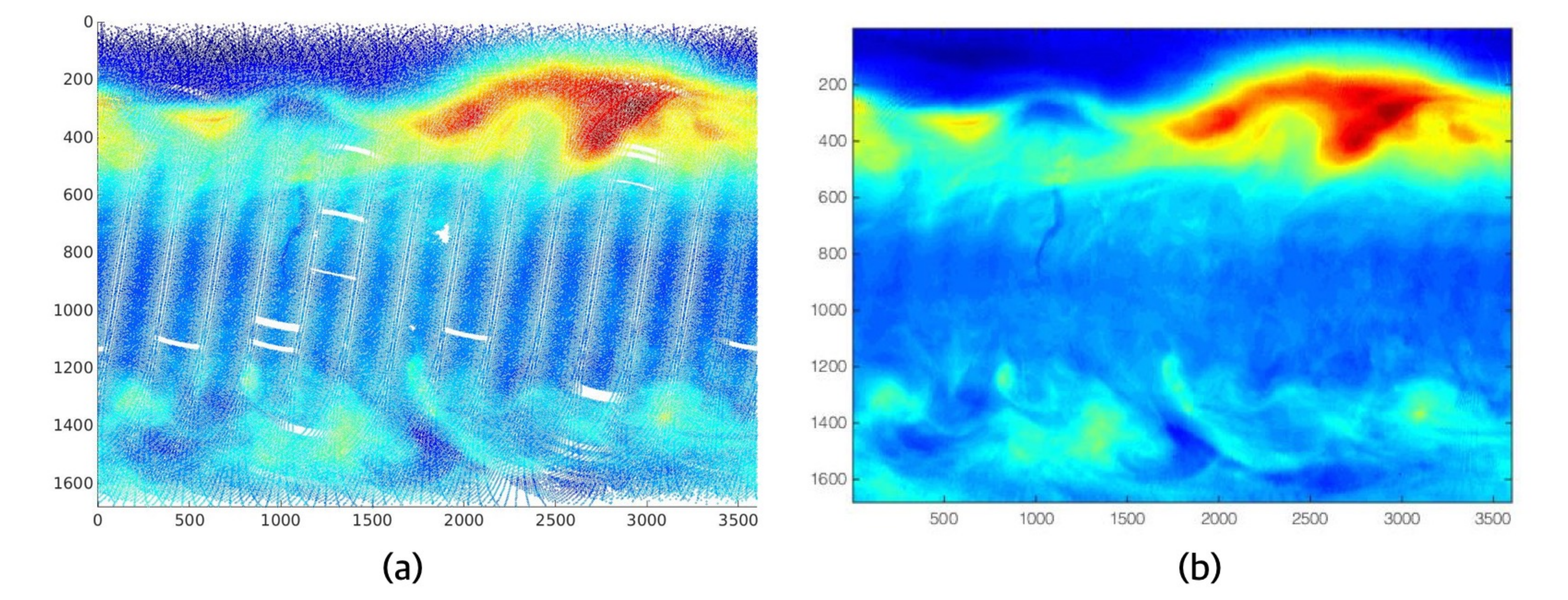}}
\vskip -0.1in
\caption{
GMRF interpolation of Ozone measurements: (a) original
sparse measurements and (b) interpolated values using a GMRF with
parameters fitted using {\bf Algorithm \ref{alg2}}.
}
\label{fig:ozone}
\end{center}
\vskip -0.35in
\end{figure*}

\subsection{Maximum Likelihood Estimation for GMRF}
\label{sec:gmrf}

\noindent{\bf GMRF with 25 million variables for synthetic data.}
We now apply our proposed algorithm for maximum likelihood (ML) estimation in Gaussian Markov Random Fields
(GMRF) \cite{rue_GMRF}. GMRF is a multi-variate joint Gaussian distribution defined with respect to a graph. Each node
of the graph corresponds to a random variable in the Gaussian distribution, where the graph captures the
conditional independence relationships (Markov properties) among the random variables. The model has been extensively used in many applications in computer vision, spatial statistics, and other fields.
The inverse covariance matrix $J$ (also called information or precision matrix) is positive definite
and sparse: $J_{ij}$ is non-zero only if the edge $\{i,j\}$ is contained in the graph.

We first consider a GMRF on a square grid of size $5000\times 5000$ (with $d= 25$ million variables)
with precision matrix $J \in \mathbb{R}^{d \times d}$ parameterized by $\rho$, i.e., each node has
four neighbors with partial correlation $\rho$. We generate a sample $\mathbf x$ from the GMRF model
(using Gibbs sampler) for parameter $\rho=-0.22$. The log-likelihood of the sample is:
$\log p({\mathbf x}| \rho) = \log \det J(\rho) - {\mathbf x}^\top J(\rho) {\mathbf x} + G$,
where $ J(\rho)$ is a matrix of dimension $25\times 10^6$ and $10^8$ non-zero entries, and $G$ is a
constant independent of $\rho$. We use {\bf Algorithm \ref{alg2}} to estimate the log-likelihood
as a function of $\rho$, as reported in Figure \ref{fig:rho}. The estimated log-likelihood is maximized at
the correct (hidden) value ${\rho} = -0.22$.

\begin{figure}[ht]
\begin{center}
\vspace{-0.05in}
\centerline{\includegraphics[width=0.65\textwidth]{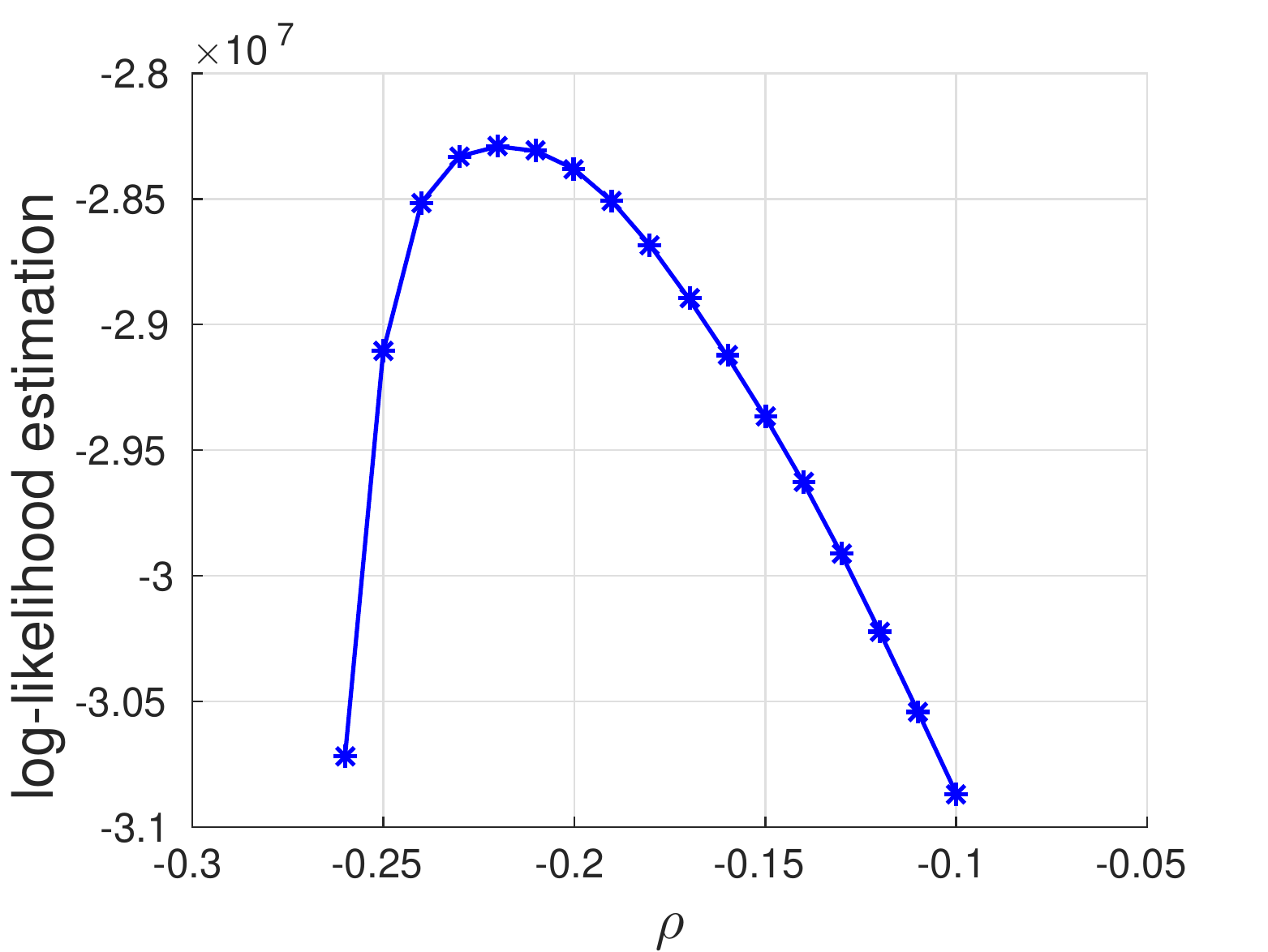}}
\vspace{-0.1in}
\caption{Log-likelihood estimation for hidden parameter $\rho$ for square GMRF model of size $5000\times 5000$.}
\label{fig:rho}
\end{center}
\vskip -0.25in
\end{figure}

\vspace{0.1in}
\noindent{\bf GMRF with 6 million variables for Ozone data.}
We also consider GMRF parameter estimation from real spatial data with missing values.
We use the data-set from \cite{aune2014_GP} that provides satellite measurements of Ozone levels
over the entire earth following the satellite tracks. We use a resolution of $0.1$ degrees in lattitude
and longitude, giving a spatial field of size $1681 \times 3601$, with over 6 million variables.
The data-set includes 172 thousands measurements. To estimate the log-likelihood in presence
of missing values, we use the Schur-complement formula for determinants. Let the precision matrix
for the entire field be $J = \left( \begin{matrix} J_o & J_{o,z}\\ J_{z,o} & J_z\end{matrix} \right)$,
where subsets $\mathbf{x}_o$ and $\mathbf{x}_z$ denote the observed and unobserved components of
$\mathbf{x}$. The marginal precision matrix of $\mathbf{x}_o$ is
$\bar{J}_{o} = J_o - J_{o, z} J_z^{-1} J_{z, o}$. Its log-determinant is computed as
$\log(\det( \bar{J}_{o})) = \log \det(J) - \log \det (J_z)$ via Schur complements. To evaluate the quadratic term
$x_o' \bar{J}_o x_o$ of the log-likelihood we need a single linear solve using an iterative solver.
We use a linear combination of the thin-plate model and the thin-membrane models
\cite{rue_GMRF}, with two parameters $\alpha$ and $\beta$:
$J = \alpha I + (\beta) J_{tp} + (1-\beta) J_{tm}$ and obtain ML estimates using {\bf Algorithm \ref{alg2}}.
Note that $\sigma_{min}(J) = \alpha$. We show the sparse measurements in Figure \ref{fig:ozone} (a) and
the GMRF interpolation using fitted values of parameters in Figure \ref{fig:ozone} (b).

\section{Conclusion}
Tools from numerical linear algebra, e.g. determinants, matrix inversion and linear solvers,
eigenvalue computation and other matrix decompositions, have been playing an important theoretical
and computational role for machine learning applications. While most matrix computations admit
polynomial-time algorithms, they are often infeasible for large-scale or high-dimensional
data-sets. In this paper, we design and analyze a high accuracy linear-time approximation
algorithm for the logarithm of matrix determinants, where its exact computation requires
cubic-time. Furthermore, it is very easy to parallelize since it requires only (separable)
matrix-vector multiplications. We believe that the proposed algorithm will find numerous
applications in machine learning problems.

\subsection*{Acknowledgement}
We would like to thank Haim Avron and Jie Chen for fruitful comments 
on Chebyshev approximations, and Cho-Jui Hsieh for providing the code for Shur 
complement-based log-det computation.

\bibliography{LogDet_arxiv}
\bibliographystyle{icml2015}

\newpage

\appendix
\section{Proof of Corollary \ref{cor1}}

For given $\varepsilon < \frac{2}{\log( \sigma_{\max}^2 )}$, set $\varepsilon_0 = \frac{\varepsilon}{2} \log \left( \frac1{\sigma_{\max}^2} \right)$. Since all eigenvalues of $C^T C$ are positive and less than 1, it follows that
\begin{align*}
&\left| \log \det \left( C^T C\right)  \right| = \left| \sum_{i=1}^d \log \lambda_i \right| \ge d \log \left( \frac1{ \sigma_{\max}^2 } \right)
\end{align*}
where $\lambda_i$ are $i$-th eigenvalues of $C^T C$. Thus, 
$$
\varepsilon_0 = \frac{\varepsilon}{2} \log \left( \frac1{\sigma_{\max}^2} \right)\le \frac{\varepsilon}{2} \frac{\left| \log \det C^T C \right|}{d} =  \varepsilon \frac{\left| \log{\left( \left| \det C \right| \right)} \right|}{d}
$$
We use $\varepsilon_0$ instead of $\varepsilon$ from Theorem \ref{thm:main2}, then following 
$$
\Pr \left[ \ \left| \log{ \left( \left| \det C \right| \right)} - \Gamma \right| \le  \varepsilon \left| \log{ \left( \left| \det C \right| \right)} \right| \ \right] \ge 1 - \zeta
$$
holds if $m$ and $n$ satifies below condition.

\section{Proof of Corollary \ref{cor2}}
Similar to proof of Corollary \ref{cor1}, set $\varepsilon_0 = \frac{\varepsilon}{2} \log \sigma_{\min}^2$. Since eigenvalues of $C^T C$ are greater than 1,
$$
\left| \log \det \left( C^T C \right) \right| \ge d \log \sigma_{\min}^2
$$
and $\varepsilon_0 \le \varepsilon \frac{\left| \log \left( | \det C | \right) \right|}{d}$. From Theorem \ref{thm:main2}, we substitute $\varepsilon_0$ into $\varepsilon$ and 
$$
\Pr \left[ \ \left| \log \det C - \Gamma \right| \le  \varepsilon \left| \log \det C \right| \ \right] \ge 1 - \zeta
$$
holds if $m$ and $n$ satifies below condition.

\section{Proof of Corollary \ref{cor:tree}}

For $\varepsilon_0=\varepsilon (\Delta_{\tt avg}-1)/2, \zeta \in (0,1)$, Theorem \ref{thm:main2} provides the following inequality:
$$
\Pr \left( |\log \det L(i^*) - \Gamma | \le \varepsilon_0 (|V| -1)\right) \ge 1 - \zeta.
$$
Observe that since vertex $i^*$ is connected all other vertices, the number of spanning tree, i.e., $\det L(i^*)$, is greater than $2^{(|V|-1)(\Delta_{\tt avg}-1)/2}$. 
Hence, we have
\begin{align*}
&\Pr \left( |\log \det L(i^*) - \Gamma | \le \varepsilon_0 (|V|-1) \right) \\
&\qquad \le \Pr \left( |\log \det L(i^*) - \Gamma | \le \varepsilon \log \det L(i^*) \right).
\end{align*}
This completes the proof of Corollary \ref{cor:tree}.

\end{document}